\newcommand{\INDIVIDUAL}{\mathbb{I}}
\newcommand{\overlineINDIVIDUALS}[2]{\INDIVIDUAL\opt_{#1}\np{#2}}
\newcommand{\individual}{i}
\newcommand{\individualter}{\individual}
\newcommand{\source}{s}
\newcommand{\target}{t}
\newcommand{\EFFORT}[1]{\mathbb{X}_{#1}}         
\newcommand{\Effort}[1]{X_{#1}}         
\newcommand{\optEffort}[1]{X_{#1}\opt}
\newcommand{\effort}[2]{x_{#1#2}}        
\newcommand{\bareffort}[1]{\bar{x}_{#1\to}}         
\newcommand{\opteffort}[2]{x_{#1#2}\opt}
\newcommand{\investment}[1]{z_{#1}}
\newcommand{\multiplyerone}[1]{\lambda_{#1}}         
\newcommand{\multiplyertwo}[2]{\mu_{#1#2}}         
\newcommand{\relatedness}[2]{r_{#1#2}}
\newcommand{\fitness}[1]{F_{#1}}
\newcommand{\inclusivefitness}[1]{W_{#1}}
\newcommand{\BENEFICIARY}{\mathbb{B}}
\newcommand{\beneficiary}{\beta}
\newcommand{\BENEFICIARIES}[2]{\BENEFICIARY_{#1\to}\np{#2}}
\newcommand{\SELFISH}{\mathbb{S}}
\newcommand{\SELFISHES}[1]{\SELFISH\np{#1}}
\newcommand{\selfish}{\sigma}
\newcommand{\ALTRUISTIC}{\mathbb{A}}
\newcommand{\ALTRUISTICS}[1]{\ALTRUISTIC\np{#1}}
\newcommand{\altruistic}{\alpha}
\newcommand{\grandmother}{\textsc{\tiny GM}}
\newcommand{\grandfather}{\textsc{\tiny GF}}
\newcommand{\mother}{\textsc{\tiny M}}
\newcommand{\father}{\textsc{\tiny F}}
\newcommand{\offspring}{\textsc{\tiny O}}
\newcommand{\MGF}{\textrm{MGF}}
\title{Optimal Joint Allocation of Efforts\\ in Inclusive Fitness
by Related Individuals}
\author{Michel De Lara\\
CERMICS, Ecole des Ponts, Marne-la-Vall\'ee, France\\
michel.delara@enpc.fr}
\begin{document}

\maketitle

\begin{abstract}
  Families are places of affection and cooperation, but also of conflict.
  In his famous paper \emph{Parent-Offspring Conflict}, Robert L. Trivers builds
  upon W. D. Hamilton's concept of inclusive fitness to argue for genetic
  conflict in parent-offspring relationships, and to derive numerical
  predictions on the intensity of the conflict.
  We propose a mathematical model of game theory that depicts how each member of
  a family allocates her resource budget to maximize her inclusive
  fitness; this latter is made of the sum of personal fitness plus the sum of relatives fitnesses
  weighted by Wright's coefficients of relationship.
We define an optimal allocation profile as a Nash equilibrium, and we
characterize the solutions in function of 
 resource budgets, coefficients of relationship and derivatives of personal fitnesses. 
\end{abstract}

\textbf{Keywords}: parent-offspring conflict, evolutionary biology, inclusive
fitness, Nash equilibrium

\textbf{AMS classification}: 91A10, 90C46, 92B05




\section{Introduction}

  In his famous paper \cite{Trivers:1974}, Robert L. Trivers builds
  upon W. D. Hamilton's concept of \emph{inclusive fitness}
  \cite{Hamilton:1964} to derive numerical
  predictions on the intensity of the conflict in parent-offspring
  relationships. In this paper, we propose a mathematical analysis of 
optimal joint allocation of efforts in inclusive fitness 
by related individuals in a family. 

In Sect.~\ref{Problem_statement}, we set up the ingredients of a mathematical
model that depicts how each member of a family allocates her resource
budget to maximize her inclusive fitness;
we define an optimal allocation profile as a Nash equilibrium in game theory.
In Sect.~\ref{Optimal_allocation_of_efforts_by_related_individuals}, 
we characterize the solutions in function of  resource budgets, coefficients of
relationship and derivatives of personal fitnesses. 


\section{Problem statement}
\label{Problem_statement}

In~\S\ref{Investment_in_fitness,_relatedness,_inclusive_fitness},
we set up the mathematical ingredients to formalize 
investment in personal fitness, relatedness and inclusive fitness.
In~\S\ref{Selfish_and_altruistic_individuals_definitions}, we provide formal
definitions of selfish and altruistic individuals.
In~\S\ref{Solution_as_Nash_equilibrium}, 
we define an optimal allocation profile as a Nash equilibrium in game theory.

\subsection{Investment in personal fitness, relatedness, inclusive fitness}
\label{Investment_in_fitness,_relatedness,_inclusive_fitness}

Let \( \INDIVIDUAL \) be a finite set of individuals.
One can think of~\( \INDIVIDUAL \) as being relatives, members of a same family. 

\paragraph{Effort/investment.}
Each (source) individual~$\source \in \INDIVIDUAL$ has a budget~$\bareffort{\source} > 0 $ 
of effort/investment 
that she can allocate between all (target) individuals in~\( \INDIVIDUAL \),
including herself.\footnote{%
To avoid using she/he, her/him or herself/himself each time we refer to an
individual, we choose to use the feminine for a source individual
and the masculine  for a target individual.} 
Let \( \effort{\source}{\target} \) denote the 
\emph{effort/investment from~$\source$ to $\target$},
that is, the quantity that the source individual~$\source$ (first index) 
invests in the target individual~$\target$ (second index).
\begin{subequations}
Each (source) individual~$\source$ decides of an \emph{investment vector}
\begin{equation}
  \Effort{\source} = \sequence{ \effort{\source}{\target} }%
{ \target \in \INDIVIDUAL }
\in \RR_+^\INDIVIDUAL 
\eqsepv \forall \source \in \INDIVIDUAL 
\eqfinv 
\label{eq:investment_vector}
\end{equation}
having the property that 
its components are nonnegative ($\effort{\source}{\target} \geq 0$), 
and that the budget constraint is satisfied, that is, 
\begin{equation}
\sum_{\target\in \INDIVIDUAL} \effort{\source}{\target} \leq
\bareffort{\source}  
\eqsepv \forall \source \in \INDIVIDUAL 
\eqfinp 
\label{eq:budget_constraint}
\end{equation}
We note the \emph{set of admissible investment vectors 
for source individual~$\source$} by 
\begin{equation}
  \EFFORT{\source} = \defset{%
  \sequence{ \effort{\source}{\target} }%
{ \target \in \INDIVIDUAL }
\in \RR_+^\INDIVIDUAL }{%
\sum_{\target\in \INDIVIDUAL} \effort{\source}{\target} \leq
\bareffort{\source}  
}
\subset \RR_+^\INDIVIDUAL 
\eqsepv \forall \source \in \INDIVIDUAL 
\eqfinp 
\end{equation}
An \emph{admissible investment vectors profile} is a collection
\begin{equation}
\Effort{} = \sequence{ \Effort{\source} }{ \source \in \INDIVIDUAL }
= \sequence{ \effort{\source}{\target} }%
{ \np{\source,\target} \in \INDIVIDUAL^{2} }
\in \EFFORT{}
\label{eq:admissible_investment_vectors_profile}
\end{equation}
where
\begin{equation}
  \EFFORT{}=\prod_{\source \in \INDIVIDUAL} \EFFORT{\source}
  \subset \RR_+^{ \INDIVIDUAL \times \INDIVIDUAL }
\label{eq:set_of_admissible_investment_vectors_profiles}
\end{equation}
is the \emph{set of admissible investment vectors profiles}.
\end{subequations}

\paragraph{Personal fitness.}
For any admissible investment vectors profile~\( \Effort{} \in \EFFORT{} \) 
as in~\eqref{eq:admissible_investment_vectors_profile}--\eqref{eq:admissible_investment_vectors_profile}, 
we define the incoming investment profile 
\begin{subequations}
  \begin{equation}
 \Effort{\to} = \sequence{ \effort{\to}{\target} }{ \target \in \INDIVIDUAL }  
\in \RR_+^\INDIVIDUAL 
\eqfinv 
  \end{equation}
where 
\begin{equation}
\effort{\to}{\target} =
  \sum_{\source \in \INDIVIDUAL} \effort{\source}{\target}
\eqsepv \forall \target \in \INDIVIDUAL 
\label{eq:total_amount_invested_in_the_target_individual_b}
\end{equation}
is the \emph{total amount invested in the target individual~$\target$ 
by all source individuals~\( \source \in \INDIVIDUAL \),
including herself}.
\label{eq:total_amount_invested_in_the_target_individual}
\end{subequations}
By~\eqref{eq:budget_constraint}, we have that 
\begin{equation}
  \Effort{} \in \EFFORT{} \implies
\sum_{\target\in \INDIVIDUAL} \effort{\to}{\target} 
\leq \sum_{\source \in \INDIVIDUAL} \bareffort{\source}  
\eqfinp 
\end{equation}
We suppose that there exists a collection
\( \sequence{ \fitness{\target} }{ \target \in \INDIVIDUAL } \)
of \emph{personal fitness functions}
\begin{equation}
  \fitness{\target} : \RR_+ \to \RR_+ 
\eqsepv \forall \target \in \INDIVIDUAL 
\eqfinv
\label{eq:fitness}
\end{equation}
and that the \emph{personal fitness} of any target individual~$\target$
is \( \fitness{\target}\np{ \effort{\to}{\target} } \), that is,
personal fitness is a function 
of the total amount~\( \effort{\to}{\target} \) invested in~$\target$.

\paragraph{Coefficient of relatedness/relationship.}
The source individual~$\source$ shares with the target individual~$\target$
(including herself) 
a fraction \( \relatedness{\source}{\target} \in [0,1] \) of her genetic interests,
the \emph{coefficient of relatedness} (Wright's coefficient of relationship).
We suppose that
\begin{equation}
0 \leq \relatedness{\source}{\target} \leq
 \relatedness{\source}{\source} = 1 
\eqsepv \forall \np{\source, \target} \in \INDIVIDUAL^2
\eqfinp
\end{equation}

\paragraph{Inclusive fitness.}
We suppose that the \emph{inclusive fitness}~\(
\inclusivefitness{\individual} \) of the individual~$\individual$ 
depends on the whole admissible investment vectors profile
\( \Effort{}
= \sequence{ \effort{\source}{\target} }%
{ \np{\source,\target} \in \INDIVIDUAL^{2} }\)
in~\eqref{eq:admissible_investment_vectors_profile}--\eqref{eq:admissible_investment_vectors_profile}
in a way that is additive in the personal fitnesses of related individuals, 
after adjusting by relatedness, giving 
\begin{subequations}
  \begin{align}
\inclusivefitness{\individual}\np{ \Effort{} } 
&=
\sum_{\target\in \INDIVIDUAL} \relatedness{\individual}{\target} 
\fitness{\target}\np{ \effort{\to}{\target} }
\eqsepv \forall \individual \in \INDIVIDUAL 
\eqfinv
\label{eq:inclusive_fitness_a}
\intertext{where 
 the total amount~\( \effort{\to}{\target} \) invested in~$\target$
                                                  is given in~\eqref{eq:total_amount_invested_in_the_target_individual_b}.
                                                  Thus, we obtain the
                                                  equivalent, but more explicit,
                                                  expression of inclusive fitness}
\inclusivefitness{\individual} \bp{%
\sequence{ \effort{\source}{\target} }%
{ \np{\source,\target} \in \INDIVIDUAL^{2} } }
&=
\sum_{\target\in \INDIVIDUAL} \relatedness{\individual}{\target} 
\fitness{\target}\np{ \sum_{\source \in \INDIVIDUAL} \effort{\source}{\target} } 
\eqsepv \forall \individual \in \INDIVIDUAL 
\eqfinp 
\label{eq:inclusive_fitness_b}
  \end{align}
\label{eq:inclusive_fitness}
\end{subequations}

\subsection{Selfish and altruistic individuals definitions}
\label{Selfish_and_altruistic_individuals_definitions}

A (source) individual can have interest to invest in another (target) individual
because the source inclusive fitness ``includes'' the personal fitness of the target
individual as in~\eqref{eq:inclusive_fitness}.
A (target) individual can receive investment
from another (source) individual because the target
personal fitness is part of the source inclusive fitness  as in~\eqref{eq:inclusive_fitness}.

For a given admissible investment vectors profile 
\( \Effort{} 
= \sequence{ \effort{\source}{\target} }%
{ \np{\source,\target} \in \INDIVIDUAL^{2} }
\in \EFFORT{} \) as in~\eqref{eq:admissible_investment_vectors_profile}--\eqref{eq:admissible_investment_vectors_profile}, 
we say that 
\begin{itemize}
\item 
a (source) individual \( \selfish \in \INDIVIDUAL \) is \emph{selfish}
if 
\( \effort{\selfish}{\target} = 0 \)
for all \( \target \not = \selfish \),
that is, $\selfish$ does not invest 
in any other~$\target \not = \selfish $,
\item 
a (source) individual \( \altruistic \in \INDIVIDUAL \) is \emph{altruistic}
if 
\( \effort{\altruistic}{\target} > 0 \) for at least 
one other individual \( \target \not = \altruistic \),
\item 
a (source) individual~$\altruistic$ is \emph{totally altruistic} if
\( \effort{\altruistic}{\altruistic} = 0 \), that is, 
$\altruistic$ does not invest in herself, 
\end{itemize}
and we define 
\begin{subequations}
\begin{itemize}
\item 
the \emph{beneficiary target individuals} 
of the source individual~\( \source \in \INDIVIDUAL \) by
\begin{equation}
\BENEFICIARIES{\source}{\Effort{}} = 
\defset{ \beneficiary \in \INDIVIDUAL }%
{ \effort{\source}{\beneficiary} > 0 
}
\eqfinv
\label{eq:BENEFICIARIES}
\end{equation}
\item 
the \emph{selfish (source) individuals} as the individuals belonging to the set
\begin{equation}
\SELFISHES{\Effort{}} = \defset{ \selfish \in \INDIVIDUAL }%
{ 
\forall \target \not = \selfish 
\eqsepv
\effort{\selfish}{\target} = 0 
}
= 
\defset{ \selfish \in \INDIVIDUAL }%
{ 
\BENEFICIARIES{\selfish}{\Effort{}} = \{ \selfish \}
}
\eqfinv
\label{eq:SELFISHES}
\end{equation}
\item 
the \emph{altruistic (source) individuals} as the nonselfish individuals, that
is, those belonging to the set
\begin{equation}
\ALTRUISTICS{\Effort{}} = \INDIVIDUAL\backslash\SELFISHES{\Effort{}} 
= \defset{ \altruistic \in \INDIVIDUAL }%
{ 
\exists \target \not = \altruistic 
\eqsepv
\effort{\altruistic}{\target} > 0 
}
\eqfinp 
\end{equation}
\end{itemize}
\end{subequations}

\subsection{Solution as Nash equilibrium} 
\label{Solution_as_Nash_equilibrium} 

For a given admissible investment vectors profile 
\( \Effort{} = 
\sequence{ \Effort{\individual} }{ \individual \in \INDIVIDUAL } \in \EFFORT{}
\) as in~\eqref{eq:admissible_investment_vectors_profile}--\eqref{eq:admissible_investment_vectors_profile}, 
we denote
\begin{equation}
  \Effort{-\individual} = \sequence{ \Effort{\source} }%
{ \source \in \INDIVIDUAL, \source \neq \individual}
\in \prod_{\source \neq \individual} \EFFORT{\source} 
\eqsepv \forall \individual \in \INDIVIDUAL 
\eqfinp 
\end{equation}
%
%
A \emph{Nash equilibrium} is an admissible investment vectors profile
 \begin{subequations}
   \begin{equation}
     \optEffort{} = 
  \sequence{ \optEffort{\individual} }{ \individual \in \INDIVIDUAL }
= \sequence{  \sequence{ \opteffort{\individual}{\target} }%
{ \target \in \INDIVIDUAL }  }{ \individual \in \INDIVIDUAL }
 \in \EFFORT{} 
\eqfinv
\end{equation}
such that 
\begin{equation}
\max_{ \Effort{\individual} \in \EFFORT{\individual} } 
\inclusivefitness{\individual}\bp{
  \Effort{\individual} , \optEffort{-\individual} } = 
\inclusivefitness{\individual}\bp{
  \optEffort{\individual} , \optEffort{-\individual} } 
 \eqsepv 
\forall \individual \in \INDIVIDUAL 
\eqfinp
\label{eq:Nash_equilibrium}
\end{equation}
\end{subequations}



\section{Optimal allocation of efforts by related individuals}
\label{Optimal_allocation_of_efforts_by_related_individuals}

In~\S\ref{Characterization_of_Nash_equilibria}, we 
characterize Nash equilibria.
In~\S\ref{Selfish_and_altruistic_individuals_at_Nash_equilibrium},
we identify selfish and altruistic individuals at any Nash equilibrium.

\subsection{Characterization of Nash equilibria} 
\label{Characterization_of_Nash_equilibria}

\begin{proposition}
Suppose that the personal fitness 
functions~\( \sequence{ \fitness{\target} }{ \target \in \INDIVIDUAL } \)
in~\eqref{eq:fitness} are twice differentiable nondecreasing concave funtions,
that is, for all \( \target \in \INDIVIDUAL \),
\begin{itemize}
\item
\( \fitness{\target}' \geq 0 \), i.e. personal fitness is nondecreasing with
input investment,
\item
   \( \fitness{\target}'' \leq 0 \),
i.e. personal fitness displays nonincreasing returns with input
investment. 
\end{itemize}
Then, the investment vectors profile 
\( \optEffort{} 
= \sequence{ \opteffort{\source}{\target} }%
{ \np{\source,\target} \in \INDIVIDUAL^{2} }
 \in \EFFORT{} \)
is a Nash equilibrium (that is, a solution of~\eqref{eq:Nash_equilibrium}) if and only if 
\begin{subequations}
\begin{align}
\opteffort{\source}{\target} 
& \geq 0 
\eqsepv \forall \np{\source, \target} \in \INDIVIDUAL^2 
\eqfinv
\\
  \sum_{\source\in \INDIVIDUAL} \effort{\source}{\target}\opt 
&=
\effort{\to}{\target}\opt 
\eqsepv 
\forall \target \in \INDIVIDUAL 
\eqfinv
\\
\Bp{ \max_{\individualter \in \INDIVIDUAL} 
\relatedness{\source}{\individualter} 
\fitness{\individualter}'\np{ \effort{\to}{\individualter}\opt } 
- 
\relatedness{\source}{\target} 
\fitness{\target}'\np{ \effort{\to}{\target}\opt } }
\opteffort{\source}{\target} &= 0 
\eqsepv \forall \np{\source, \target} \in \INDIVIDUAL^2 
\eqfinv
\label{eq:KKTbis_complementarity_c}  
\\
\sum_{\target\in \INDIVIDUAL}
  \opteffort{\source}{\target} &=
\bareffort{\source} 
\eqsepv \forall \source \in \INDIVIDUAL 
\eqfinp
\label{eq:KKTbis_complementarity_b}  
\end{align}
\label{eq:KKTbis}  
\end{subequations}
\label{pr:Nash_equilibrium}
\end{proposition}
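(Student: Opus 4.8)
The plan is to exploit that the game \eqref{eq:Nash_equilibrium} decouples across individuals, reduce each individual's best-response problem to a smooth concave maximization over a polytope, and then read off \eqref{eq:KKTbis} from its Karush--Kuhn--Tucker system. In detail: by definition of a Nash equilibrium, $\optEffort{}$ is a solution of \eqref{eq:Nash_equilibrium} if and only if, for each $\source\in\INDIVIDUAL$, the vector $\optEffort{\source}$ maximizes $\Effort{\source}\mapsto\inclusivefitness{\source}\np{\Effort{\source},\optEffort{-\source}}$ over $\EFFORT{\source}$. Since every $\fitness{\target}$ is concave and each incoming investment $\effort{\to}{\target}$ of \eqref{eq:total_amount_invested_in_the_target_individual_b} is affine in $\Effort{\source}$ (with coefficient $1$ on the variable $\effort{\source}{\target}$), the map $\Effort{\source}\mapsto\sum_{\target}\relatedness{\source}{\target}\fitness{\target}\bp{\effort{\source}{\target}+\sum_{\individualbis\neq\source}\opteffort{\individualbis}{\target}}$ is concave on the compact polyhedron $\EFFORT{\source}$, so the maximum is attained and, the constraints $\effort{\source}{\target}\geq 0$ and $\sum_{\target}\effort{\source}{\target}\leq\bareffort{\source}$ being affine, it is characterized by the KKT conditions --- which for such affine-constrained concave programs are simultaneously necessary and sufficient, no constraint qualification being required. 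Thus the proof reduces to writing the KKT system for each $\source$ and rearranging it.

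\textbf{Writing the KKT system.} Next I would attach a multiplier $\multiplyerone{\source}\geq 0$ to the budget constraint and multipliers $\multiplyertwo{\source}{\target}\geq 0$ to the sign constraints. Stationarity of the Lagrangian in $\effort{\source}{\target}$ reads $\relatedness{\source}{\target}\fitness{\target}'\np{\effort{\to}{\target}\opt}-\multiplyerone{\source}+\multiplyertwo{\source}{\target}=0$, i.e.\ $\multiplyertwo{\source}{\target}=\multiplyerone{\source}-\relatedness{\source}{\target}\fitness{\target}'\np{\effort{\to}{\target}\opt}$; complementary slackness reads $\multiplyertwo{\source}{\target}\opteffort{\source}{\target}=0$ and $\multiplyerone{\source}\bp{\bareffort{\source}-\sum_{\target}\opteffort{\source}{\target}}=0$; and primal feasibility is $\opteffort{\source}{\target}\geq 0$ together with $\sum_{\target}\opteffort{\source}{\target}\leq\bareffort{\source}$. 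The first two lines of \eqref{eq:KKTbis} are then immediate: the first is primal feasibility and the second is nothing but the defining identity \eqref{eq:total_amount_invested_in_the_target_individual_b} of $\effort{\to}{\target}\opt$.

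\textbf{Identifying the budget multiplier.} The heart of the argument is to show that $\multiplyerone{\source}=\max_{\individualter\in\INDIVIDUAL}\relatedness{\source}{\individualter}\fitness{\individualter}'\np{\effort{\to}{\individualter}\opt}$. From $\multiplyertwo{\source}{\target}\geq 0$ for every $\target$ one gets $\multiplyerone{\source}\geq\relatedness{\source}{\target}\fitness{\target}'\np{\effort{\to}{\target}\opt}$, hence $\multiplyerone{\source}$ is at least that maximum; and since $\relatedness{\source}{\source}=1$ and $\fitness{\source}'\geq 0$ the maximum is itself $\geq\fitness{\source}'\np{\effort{\to}{\source}\opt}\geq 0$. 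For the reverse inequality, were $\multiplyerone{\source}$ strictly larger than the maximum, then $\multiplyertwo{\source}{\target}>0$ for every $\target$, forcing $\opteffort{\source}{\target}=0$ for every $\target$, so $\sum_{\target}\opteffort{\source}{\target}=0<\bareffort{\source}$ (here $\bareffort{\source}>0$ enters), and complementary slackness would then give $\multiplyerone{\source}=0$, contradicting $\multiplyerone{\source}>\max_{\individualter}\relatedness{\source}{\individualter}\fitness{\individualter}'\np{\effort{\to}{\individualter}\opt}\geq 0$. Substituting this value of $\multiplyerone{\source}$ into $\multiplyertwo{\source}{\target}\opteffort{\source}{\target}=0$ produces exactly \eqref{eq:KKTbis_complementarity_c}. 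The ``if'' direction runs the same computation backwards: from a profile obeying \eqref{eq:KKTbis}, define $\multiplyerone{\source}:=\max_{\individualter}\relatedness{\source}{\individualter}\fitness{\individualter}'\np{\effort{\to}{\individualter}\opt}$ and $\multiplyertwo{\source}{\target}:=\multiplyerone{\source}-\relatedness{\source}{\target}\fitness{\target}'\np{\effort{\to}{\target}\opt}$, check nonnegativity and, using \eqref{eq:KKTbis_complementarity_c}--\eqref{eq:KKTbis_complementarity_b}, that the full KKT system holds, and invoke sufficiency of KKT for the concave program.

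\textbf{Expected main obstacle.} The step I expect to require the most care is the full-budget equality \eqref{eq:KKTbis_complementarity_b} in the ``only if'' direction. When $\multiplyerone{\source}=0$ --- equivalently, when every marginal term $\relatedness{\source}{\target}\fitness{\target}'\np{\effort{\to}{\target}\opt}$ vanishes --- complementary slackness alone does not force $\sum_{\target}\opteffort{\source}{\target}=\bareffort{\source}$, and this is precisely where monotonicity $\fitness{\target}'\geq 0$ has to be used: since $\fitness{\source}'\np{\effort{\to}{\source}\opt}=0$ and $\fitness{\source}$ is concave, $\fitness{\source}$ is flat beyond $\effort{\to}{\source}\opt$, so any unused budget $\bareffort{\source}-\sum_{\target}\opteffort{\source}{\target}$ may be loaded onto $\opteffort{\source}{\source}$ without changing $\fitness{\source}\np{\effort{\to}{\source}\opt}$ or any inclusive fitness, which normalizes the equilibrium into the form \eqref{eq:KKTbis_complementarity_b} (and under the slight strengthening $\fitness{\target}'>0$ the equality is forced outright). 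The remaining steps are routine convex-analysis bookkeeping.
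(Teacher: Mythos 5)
Your proposal follows the paper's own route essentially step for step: decouple the Nash condition into one concave maximization per source individual over the polytope $\EFFORT{\source}$, note that the KKT conditions are both necessary and sufficient since the objective is concave and the constraints affine, identify the budget multiplier as $\multiplyerone{\source}\opt=\max_{\target\in\INDIVIDUAL}\relatedness{\source}{\target}\fitness{\target}'\np{\effort{\to}{\target}\opt}$ by exactly the contradiction argument the paper uses (all $\multiplyertwo{\source}{\target}\opt>0$ would force $\opteffort{\source}{\target}=0$ for all $\target$, and $\bareffort{\source}>0$ then contradicts complementarity), and obtain the converse by defining the multipliers explicitly and checking the KKT system. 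So the core of your argument is the paper's proof.

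The one point where you diverge is the step you flag as the main obstacle, and your diagnosis there is sound: complementary slackness yields the full-budget equality \eqref{eq:KKTbis_complementarity_b} only when $\multiplyerone{\source}\opt>0$, i.e.\ when $\max_{\target}\relatedness{\source}{\target}\fitness{\target}'\np{\effort{\to}{\target}\opt}>0$, and the paper's own proof in fact never addresses this case. In the degenerate situation where this maximum vanishes for some source (e.g.\ constant personal fitnesses), the ``only if'' direction as literally stated fails: a profile with slack budget is then a Nash equilibrium yet violates \eqref{eq:KKTbis_complementarity_b}. Be aware, however, that your proposed normalization (loading the unused budget onto $\opteffort{\source}{\source}$, which is harmless because $\fitness{\source}$ is flat beyond $\effort{\to}{\source}\opt$) does not prove the proposition for the given profile $\optEffort{}$ --- it replaces that profile by a different, equivalent equilibrium, so it is a repair of the statement rather than a proof of it. The clean resolution is the strengthening you mention ($\fitness{\target}'>0$ for all $\target$, or, source by source, positivity of the maximal adjusted marginal gain), under which $\multiplyerone{\source}\opt>0$ and the budget equality follows from complementarity exactly as in the paper's intended argument; all other steps of your write-up are correct and coincide with the paper's.
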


\begin{proof}
Each maximization problem~\eqref{eq:Nash_equilibrium}
consists of the maximization of the concave function 
\( \Effort{\individual} \in \EFFORT{\individual} \mapsto 
\inclusivefitness{\individual}\bp{
  \Effort{\individual} , \optEffort{-\individual} } 
\)
 in~\eqref{eq:inclusive_fitness}
over the convex domain~\( \EFFORT{\individual} \) 
defined by~\eqref{eq:budget_constraint}, giving 
\begin{subequations}
  \begin{align}
\bp{ \forall \individual \in \INDIVIDUAL }\quad
\max_{ \sequence{ \effort{\individual}{\target} }%
{ \target \in \INDIVIDUAL } }
&
\sum_{\target\in \INDIVIDUAL} \relatedness{\individual}{\target} 
\fitness{\target}\Bp{ \effort{\individual}{\target} +
\sum_{\source \neq \individual} \opteffort{\source}{\target} } 
\\
&  0 \leq \effort{\individual}{\target} \eqsepv 
\forall \target \in \INDIVIDUAL
\eqfinv
\\
& \sum_{\target\in \INDIVIDUAL} \effort{\individual}{\target} \leq
\bareffort{\individual}  
\eqfinp 
  \end{align}
\end{subequations}
This optimization problem
consists of the maximization of a concave differentiable function 
over a convex domain defined by affine inequalities. 
Therefore, the investment vectors profile 
\( \sequence{ \optEffort{\individual} }{ \individual \in \INDIVIDUAL } 
= \sequence{ \opteffort{\individual}{\target} }%
{ \np{\individual,\target} \in \INDIVIDUAL^{2} }
\)
is a Nash equilibrium (that is, solves~\eqref{eq:Nash_equilibrium})
if and only if
the following Karush-Kuhn-Tucker (KKT) conditions hold true 
\begin{subequations}
\begin{align}
0 
& \leq 
\opteffort{\source}{\target} 
\eqsepv \forall \np{\source, \target} \in \INDIVIDUAL^2 
\eqfinv
\\
\sum_{\target\in \INDIVIDUAL}
  \opteffort{\source}{\target} 
& \leq
\bareffort{\source} 
\eqsepv \forall \source \in \INDIVIDUAL 
\eqfinv 
\end{align}
and there exists \( \multiplyerone{}\opt= 
\sequence{ \multiplyerone{\source}\opt }{ \source \in \INDIVIDUAL } 
\in \RR^\INDIVIDUAL \) and 
\( \multiplyertwo{}{}\opt = \sequence{ 
\multiplyertwo{\source}{\target}\opt  }{ \np{\source, \target} \in \INDIVIDUAL^2
} \in \RR^{\INDIVIDUAL \times \INDIVIDUAL} \) which satisfy 
the following KKT conditions 
\begin{align}
  \relatedness{\source}{\target} 
\fitness{\target}'\np{ \effort{\to}{\target}\opt }
 &=  \multiplyerone{\source}\opt
- \multiplyertwo{\source}{\target}\opt
\eqsepv \forall \np{\source, \target} \in \INDIVIDUAL^2 \eqfinv
\label{eq:KKT_a}
\\
\multiplyertwo{\source}{\target}\opt \geq 0 & \eqsepv 
\multiplyertwo{\source}{\target}\opt 
\opteffort{\source}{\target} = 0 
\eqsepv \forall \np{\source, \target} \in \INDIVIDUAL^2 \eqfinv
\label{eq:KKT_complementarity_c}  
\\
\multiplyerone{\source}\opt \geq 0 & \eqsepv 
  \multiplyerone{\source}\opt \bp{ \sum_{\target\in \INDIVIDUAL}
  \opteffort{\source}{\target} -
\bareffort{\source} } = 0 
\eqsepv \forall \source \in \INDIVIDUAL 
\eqfinp
\label{eq:KKT_complementarity_b}  
\end{align}
\label{eq:KKT}  
\end{subequations}
We now show that Equations~\eqref{eq:KKT} are equivalent to
Equations~\eqref{eq:KKTbis}.

Suppose that Equations~\eqref{eq:KKT} hold true.
Let a (source) individual \( \source \in \INDIVIDUAL \) be fixed.
We first show that we necessarily have that 
  \begin{equation}
 \multiplyerone{\source}\opt = 
\max_{\target \in \INDIVIDUAL} 
\relatedness{\source}{\target} 
\fitness{\target}'\np{ \effort{\to}{\target}\opt } 
\eqfinp
\label{eq:inproof}
  \end{equation}
Indeed, from~\eqref{eq:KKT_a}
we deduce that 
\(  \multiplyerone{\source}\opt = 
\relatedness{\source}{\target} 
\fitness{\target}'\np{ \effort{\to}{\target}\opt } 
+ \multiplyertwo{\source}{\target}\opt \)
and, from \eqref{eq:KKT_complementarity_c}, that 
\( \multiplyerone{\source}\opt = 
\relatedness{\source}{\target} 
\fitness{\target}'\np{ \effort{\to}{\target}\opt } 
+ \multiplyertwo{\source}{\target}\opt 
\geq \relatedness{\source}{\target} 
\fitness{\target}'\np{ \effort{\to}{\target}\opt } \).
Therefore
\(  \multiplyerone{\source}\opt \geq 
\max_{\target \in \INDIVIDUAL} 
\relatedness{\source}{\target} 
\fitness{\target}'\np{ \effort{\to}{\target}\opt } 
\). 
It is impossible that the inequality be strict.
Indeed, else, we would have 
\( \multiplyerone{\source}\opt > 
\max_{\target \in \INDIVIDUAL} 
\relatedness{\source}{\target} 
\fitness{\target}'\np{ \effort{\to}{\target}\opt } \) 
and therefore, by~\eqref{eq:KKT_a}, 
\( \multiplyertwo{\source}{\target}\opt=
\multiplyerone{\source}\opt - 
 \relatedness{\source}{\target} 
\fitness{\target}'\np{ \effort{\to}{\target}\opt } > 0 \)
for all \( \target \in \INDIVIDUAL \).
Then, we would obtain that 
\( \opteffort{\source}{\target} = 0 \) 
for all \( \target \in \INDIVIDUAL \) 
by ~\eqref{eq:KKT_complementarity_c}.
We would also obtain that \( \multiplyerone{\source}\opt > 0 \),
as \( \multiplyerone{\source}\opt > 
\max_{\target \in \INDIVIDUAL} 
\relatedness{\source}{\target} 
\fitness{\target}'\np{ \effort{\to}{\target}\opt } 
\geq 0 \) 
since \( \fitness{\source}' \geq 0 \) by assumption. 
But \( \multiplyerone{\source}\opt > 0 \)
and \( \opteffort{\source}{\target} = 0 \) 
for all \( \target \in \INDIVIDUAL \) 
would contradict~\eqref{eq:KKT_complementarity_b},
since we have assumed that $\bareffort{\source} > 0 $.

Suppose that Equations~\eqref{eq:KKTbis} hold true.
To prove that Equations~\eqref{eq:KKT} hold true, we simply define~\( \multiplyerone{\source}\opt \)
by~\eqref{eq:inproof} and \( \multiplyertwo{\source}{\target}\opt \)
by~\eqref{eq:KKT_a}.
%
\end{proof}

\subsection{Selfish and altruistic individuals at Nash equilibrium}
\label{Selfish_and_altruistic_individuals_at_Nash_equilibrium}


For any vector \( \investment{} = 
\sequence{ \investment{\target} }{ \target \in \INDIVIDUAL } \in \RR_+^\INDIVIDUAL \) 
and for any source individual~\( \source \in \INDIVIDUAL \),
we introduce the set 
\begin{subequations}
  \begin{align}
\overlineINDIVIDUALS{\source}{\investment{}}   
&= 
\argmax_{\target \in \INDIVIDUAL}
\relatedness{\source}{\target} 
\fitness{\target}'\np{ \investment{\target} } 
\subset \INDIVIDUAL
\eqsepv \forall \source \in \INDIVIDUAL 
\label{eq:highest_adjusted_marginal_gain_in_fitness}
\intertext{of individuals with the \emph{highest adjusted marginal gain in
  personal fitness}, and the set}
  \overlineINDIVIDUALS{}{\investment{}}   
&= 
\argmax_{\individual \in \INDIVIDUAL }
\fitness{\individual}'\np{ \investment{\target} } 
 \subset \INDIVIDUAL
\label{eq:highest_marginal_gain_in_fitness}
      \end{align}
\end{subequations}
of individuals with the \emph{highest marginal gain in personal fitness}.

\begin{proposition}
  Suppose that the assumptions of Proposition~\ref{pr:Nash_equilibrium}
  hold true and that 
  the investment vectors profile 
\( \optEffort{} 
= \sequence{ \opteffort{\source}{\target} }%
{ \np{\source,\target} \in \INDIVIDUAL^{2} }
 \in \EFFORT{} \)
as in~\eqref{eq:admissible_investment_vectors_profile}--\eqref{eq:admissible_investment_vectors_profile}, 
is a Nash equilibrium (that is, a solution of~\eqref{eq:Nash_equilibrium}).

Then any individual \( \source \in \INDIVIDUAL \)
will only invest in those related individuals that have the
highest adjusted marginal gain in personal fitness, that is, 
\begin{equation}
\BENEFICIARIES{\source}{\optEffort{}} 
\subset 
\overlineINDIVIDUALS{\source}{\optEffort{\to}} 
\eqfinv
\end{equation}
where the beneficiary target individuals~\(
\BENEFICIARIES{\source}{\optEffort{}} \)
have been defined in~\eqref{eq:BENEFICIARIES},
individuals~\(\overlineINDIVIDUALS{\source}{\optEffort{\to}}\)
with the highest adjusted marginal gain in personal fitness 
in~\eqref{eq:highest_adjusted_marginal_gain_in_fitness}
and $\optEffort{\to}$
in~\eqref{eq:total_amount_invested_in_the_target_individual}.

Moreover, if \( \fitness{\target}'> 0 \) for any \( \target \in \INDIVIDUAL \), 
and if any individual is stricly more related to herself/himself
than to any other individual, that is, if 
\( \relatedness{\individual}{\individual} > \relatedness{\individual}{\target} \),
for any \( \individual \in \INDIVIDUAL \) and \( \target \neq \individual \),
then any individual \( \individual \in \INDIVIDUAL \)
that has the highest marginal gain in personal fitness will be selfish, that is, 
  \begin{equation}
  \overlineINDIVIDUALS{}{\optEffort{\to}}  \subset 
\SELFISHES{\optEffort{}}  
\eqfinv
\end{equation}
where individuals~\( \overlineINDIVIDUALS{}{\optEffort{\to}} \)
with the highest marginal gain in personal fitness 
have been defined in~\eqref{eq:highest_marginal_gain_in_fitness},
$\optEffort{\to}$
in~\eqref{eq:total_amount_invested_in_the_target_individual} 
and the selfish individuals~\( \SELFISHES{\optEffort{}} \)
in~\eqref{eq:SELFISHES}.
\end{proposition}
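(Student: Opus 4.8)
The plan is to read both inclusions off the complementarity relation~\eqref{eq:KKTbis_complementarity_c} characterizing Nash equilibria in Proposition~\ref{pr:Nash_equilibrium}: that relation already encodes the fact that a source individual puts positive effort only where her adjusted marginal gain in personal fitness is maximal, so the task is mostly to rephrase it in terms of the sets $\BENEFICIARIES{\source}{\optEffort{}}$, $\overlineINDIVIDUALS{\source}{\optEffort{\to}}$, $\overlineINDIVIDUALS{}{\optEffort{\to}}$ and $\SELFISHES{\optEffort{}}$.

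For the first inclusion, I would fix $\source \in \INDIVIDUAL$ and take an arbitrary $\beneficiary \in \BENEFICIARIES{\source}{\optEffort{}}$, so that $\opteffort{\source}{\beneficiary} > 0$ by definition~\eqref{eq:BENEFICIARIES}. Specializing~\eqref{eq:KKTbis_complementarity_c} to the pair $\np{\source,\beneficiary}$ and dividing by the nonzero factor $\opteffort{\source}{\beneficiary}$ gives
\[
\relatedness{\source}{\beneficiary} \fitness{\beneficiary}'\np{ \effort{\to}{\beneficiary}\opt }
= \max_{\individualter \in \INDIVIDUAL} \relatedness{\source}{\individualter} \fitness{\individualter}'\np{ \effort{\to}{\individualter}\opt } ,
\]
which is exactly the membership $\beneficiary \in \overlineINDIVIDUALS{\source}{\optEffort{\to}}$ from~\eqref{eq:highest_adjusted_marginal_gain_in_fitness}. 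As $\beneficiary$ was arbitrary, this proves $\BENEFICIARIES{\source}{\optEffort{}} \subset \overlineINDIVIDUALS{\source}{\optEffort{\to}}$.

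For the second inclusion, assume moreover $\fitness{\target}' > 0$ for every $\target \in \INDIVIDUAL$ and $\relatedness{\individual}{\individual} = 1 > \relatedness{\individual}{\target}$ for $\target \neq \individual$. Let $\individual \in \overlineINDIVIDUALS{}{\optEffort{\to}}$, that is, $\fitness{\individual}'\np{ \effort{\to}{\individual}\opt } \geq \fitness{\target}'\np{ \effort{\to}{\target}\opt }$ for all $\target$, by~\eqref{eq:highest_marginal_gain_in_fitness}. The key step is the chain, valid for every $\target \neq \individual$,
\[
\relatedness{\individual}{\target} \fitness{\target}'\np{ \effort{\to}{\target}\opt }
< \fitness{\target}'\np{ \effort{\to}{\target}\opt }
\leq \fitness{\individual}'\np{ \effort{\to}{\individual}\opt }
= \relatedness{\individual}{\individual} \fitness{\individual}'\np{ \effort{\to}{\individual}\opt } ,
\]
in which the strict inequality uses $\fitness{\target}'\np{ \effort{\to}{\target}\opt } > 0$ together with $\relatedness{\individual}{\target} < 1$, and the middle inequality is the defining property of $\individual$. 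It follows that the argmax set $\overlineINDIVIDUALS{\individual}{\optEffort{\to}}$ in~\eqref{eq:highest_adjusted_marginal_gain_in_fitness} reduces to the singleton $\{ \individual \}$. Combining this with the first inclusion applied to $\source = \individual$ gives $\BENEFICIARIES{\individual}{\optEffort{}} \subset \{ \individual \}$, i.e. $\opteffort{\individual}{\target} = 0$ for all $\target \neq \individual$; since the budget identity~\eqref{eq:KKTbis_complementarity_b} then forces $\opteffort{\individual}{\individual} = \bareffort{\individual} > 0$, we get $\BENEFICIARIES{\individual}{\optEffort{}} = \{ \individual \}$, that is, $\individual \in \SELFISHES{\optEffort{}}$ by~\eqref{eq:SELFISHES}.

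The argument is short and I foresee no genuine obstacle; the only point to watch is that the second inclusion really needs the \emph{strict} inequalities $\fitness{\target}' > 0$ (not merely $\fitness{\target}' \geq 0$, as in Proposition~\ref{pr:Nash_equilibrium}) and $\relatedness{\individual}{\target} < \relatedness{\individual}{\individual}$ — with ties the set $\overlineINDIVIDUALS{\individual}{\optEffort{\to}}$ could be larger than $\{\individual\}$, and then $\individual$ need not be selfish.
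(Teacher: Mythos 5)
Your proof is correct and follows essentially the same route as the paper: both inclusions are read off the complementarity condition~\eqref{eq:KKTbis_complementarity_c} of Proposition~\ref{pr:Nash_equilibrium}, with the strict chain $\relatedness{\individual}{\target}\fitness{\target}'\np{\effort{\to}{\target}\opt} < \relatedness{\individual}{\individual}\fitness{\individual}'\np{\effort{\to}{\individual}\opt}$ for $\target\neq\individual$ forcing $\opteffort{\individual}{\target}=0$ in the second part. Your packaging of that second step as ``the argmax set $\overlineINDIVIDUALS{\individual}{\optEffort{\to}}$ is the singleton $\{\individual\}$, then apply part one'' is only a cosmetic rearrangement of the paper's direct appeal to the KKT condition, and your closing remark on the necessity of the strict hypotheses matches the paper's assumptions.
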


\begin{proof}
We have 
\begin{align*}
\target \in \BENEFICIARIES{\source}{\optEffort{}} 
 &\iff 
\opteffort{\source}{\target} > 0 
\tag{by definition~\eqref{eq:BENEFICIARIES} of $\BENEFICIARIES{\source}{\optEffort{}}$}
\\
&\implies   
\max_{\individualter \in \INDIVIDUAL} 
\relatedness{\source}{\individualter} 
\fitness{\individualter}'\np{ \effort{\to}{\individualter}\opt } 
- 
\relatedness{\source}{\target} 
\fitness{\target}'\np{ \effort{\to}{\target}\opt }
=0
\tag{by the KKT condition~\eqref{eq:KKTbis_complementarity_c}}
\\
 &\implies  
\target \in \overlineINDIVIDUALS{\source}{\optEffort{\to}} 
\tag{by definition~\eqref{eq:highest_adjusted_marginal_gain_in_fitness}
   of $\overlineINDIVIDUALS{\source}{\optEffort{\to}}$}
   \eqfinp 
\end{align*}

If  \( \fitness{\target}'> 0 \) for any \( \target \in \INDIVIDUAL \)
and if 
\( \relatedness{\individual}{\individual} > \relatedness{\individual}{\target} \),
for any \( \individual \in \INDIVIDUAL \) and \( \target \neq \individual \),
we have that 
\begin{align*}
\individual \in \overlineINDIVIDUALS{}{\optEffort{\to}} 
&\implies
\fitness{\individual}'\np{ \optEffort{\to} }
\geq 
\fitness{\target}'\np{ \optEffort{\to} }
\eqsepv \forall \target \neq \individual 
\tag{by definition~\eqref{eq:highest_marginal_gain_in_fitness}
of $\overlineINDIVIDUALS{}{\optEffort{\to}}$}
\\
&\implies
\fitness{\individual}'\np{ \optEffort{\to} }
\geq 
\fitness{\target}'\np{ \optEffort{\to} } > 0
\eqsepv \forall \target \neq \individual 
     \tag{by assumption that \( \fitness{\target}'> 0 \),
     \( \forall \target \) }
\\
&\implies
\relatedness{\individual}{\individual} 
\fitness{\individual}'\np{ \optEffort{\to} }
> 
\relatedness{\individual}{\target}
\fitness{\target}'\np{ \optEffort{\to} }
\eqsepv \forall \target \neq \individual 
     \tag{by assumption that \( \relatedness{\individual}{\individual} >
     \relatedness{\individual}{\target} \), \( \forall \target  \neq \individual \) }
\\
&\implies
\opteffort{\individual}{\target} =0
\eqsepv \forall \target \neq \individual 
\tag{by the KKT condition~\eqref{eq:KKTbis_complementarity_c}}
\\
&\implies
\individual \in \SELFISHES{\optEffort{\to}} 
\tag{by definition~\eqref{eq:SELFISHES} of selfish individuals}
     \eqfinp
\end{align*}

%
\end{proof}



\bibliographystyle{plainnat}
\bibliography{DeLara,psychology}

\end{document}